\newenvironment{proof}{\paragraph{Proof:}}{\hfill$\square$}
\newtheorem{theorem}{\bf Theorem}
\newtheorem{proposition}{\bf Proposition}
\newtheorem{corollary}{\bf Corollary}
\newtheorem{remark}{\bf Remark}
\newtheorem{lemma}{\bf Lemma}
\newtheorem{definition}{\bf Definition}
\renewcommand{\H}{\mathcal{H}}
\renewcommand{\L}{\mathcal{L}}
\renewcommand{\S}{\mathcal{S}}
\renewcommand{\P}{\mathcal{P}}
\newcommand{\R}{\mathbb{R}}
\begin{document}

\title{Sample Complexity Bounds for Scalar Parameter Estimation Under Quantum Differential Privacy } 

\author{Farhad Farokhi
\thanks{F.~Farokhi is with the Department of Electrical and Electronic Engineering at the University of Melbourne.}
}

\maketitle

\begin{abstract}
This paper presents tight upper and lower bounds for minimum number of samples (copies of a quantum state) required to attain a prescribed accuracy (measured by error variance) for scalar parameters estimation using unbiased estimators under quantum local differential privacy for qubits. Particularly, the best-case (optimal) scenario is considered by minimizing the sample complexity over all differentially-private channels; the worst-case channels can be arbitrarily uninformative and render the  problem ill-defined. In the small privacy budget $\epsilon$ regime, i.e., $\epsilon\ll 1$, the sample complexity scales as $\Theta(\epsilon^{-2})$. This bound matches that of classical parameter estimation under local differential privacy. The lower bound however loosens in the large privacy budget regime, i.e., $\epsilon\gg 1$. The upper bound for the minimum number of samples is  generalized to qudits (with dimension $d$) resulting in sample complexity of $\mathcal{O}(d\epsilon^{-2})$. 
\end{abstract}

\begin{IEEEkeywords}
	Quantum Differential Privacy, Sample Complexity, Parameter Estimation, Quantum Cram\'{e}r-Rao Bound.
\end{IEEEkeywords}

\section{Introduction}
Differential privacy~\cite{dwork2006calibrating,dwork2014algorithmic} has taken over computer science literature as the gold standard for private data analysis with deployment in government and commercial services~\cite{abowd2018us,thakurta2017learning, erlingsson2014rappor}. Differential privacy requires that changes in a single entry of a dataset to only generate small changes in the probability distribution of the output~\cite{dwork2014algorithmic}. The magnitude of the change in the probability distribution of the output is parameterized by the so-called privacy budget, which is a design choice and relates to the risk appetite of the data curator. Differential privacy is often guaranteed by adding noise via Laplace, Gaussian, or exponential mechanisms~\cite{dwork2014algorithmic} to the outputs, where the magnitude of the noise is proportional to the sensitivity of the output to individual data changes and reciprocal to the privacy budget. In the large-data regime, the effect of the noise can be mostly removed from empirical distributions via deconvolution~\cite{farokhi2020deconvoluting}, but differential privacy is known to cause utility degradation in general~\cite{seeman2024between}. The widespread adoption of differential privacy has resulted in natural questions involving sample complexity of learning and estimation problems, such as hypothesis testing~\cite{pensia2023simple}, parameter estimation~\cite{cai2021cost}, mean estimation~\cite{liu2021robust}, classification~\cite{feldman2014sample}, and regression~\cite{wu2020value}, under privacy constraint. 

In light of advances in quantum computing and communication, differential privacy has been extended to the quantum domain~\cite{aaronson2019gentle,zhou2017differential, hirche2023quantum}. Similarly, quantum differential privacy requires that changes in a single entry of the dataset to only result in small changes in the probability distribution of measurements implemented on encoded quantum states~\cite{hirche2023quantum}, which can be guaranteed via quantum noise, such as depolarizing noise~\cite{zhou2017differential,hirche2023quantum}. Further extensions, such as pufferfish privacy~\cite{nuradha2024quantum} and information-theoretic privacy~\cite{farokhi2024barycentric}, have been also presented. These definitions have fueled a line of research in understanding fundamental limits of quantum data processing under privacy constraints. Hypothesis testing under quantum differential privacy was recently studied in~\cite{farokhi2023quantum, nuradha2025contraction, cheng2024sample}. Limits of quantum machine learning under privacy were also studied in~\cite{watkins2023quantum, farokhi2024barycentric}.

This paper focuses on deterministic (non-Bayesian) parameter estimation under quantum differential privacy. We particularly consider the case where quantum states depend on scalar parameters that must be estimated. For instance, the parameter can model an unknown aspect of Hamiltonian in quantum computing and the state is the outcome of the evolution under the-said Hamiltonian with fixed initial condition. In this case, estimating the parameter is akin to reverse-engineering the quantum computing circuit. Alternatively, in quantum machine learning, the state can encode some sensitive data and the parameter can be a private attribute that the adversary may seek to learn. Here, we are interested in determining the minimum number of samples, or copies of quantum states, required to attain a prescribed parameter estimation accuracy, measured by the variance of the estimation error, in the best-case (optimal) scenario. That is, we characterize the smallest number of samples over all differentially-private channels. This is because the worst-case differentially-private channels can be arbitrarily uninformative and render the  problem ill-defined; see Footnote~\ref{foot:1}. The minimum number of samples is referred to as the sample complexity. From the perspective of the estimator or adversary, the sample complexity determines the amount of the data that needs to be gathered for reliable parameter estimation while, from the perspective of the data curator, the sample complexity provides bounds on the number of interaction rounds with an adversary without enabling parameter estimation beyond a prescribed level. The latter provides an accuracy or risk interpretation for the privacy budget in quantum differential privacy. We use the quantum hockey-stick divergence~\cite{hirche2023quantum} to capture quantum differential privacy and the quantum Cram\'{e}r-Rao bound~\cite{helstrom1967minimum, braunstein1994statistical, braunstein1996generalized} to establish bounds on the number of quantum state copies or samples required to attain a prescribed estimation error variance. We particularly use the Bloch sphere representation and explicit Fisher information formulas in this regime~\cite{zhong2013fisher}. The quantum Cram\'{e}r-Rao bound, in its vanilla form, restricts us to unbiased estimators. The motivation behind focusing on scalar parameters stems from the tightness of quantum Cram\'{e}r-Rao bound in the finite measurement regime for scalar parameters. 

The remainder of this paper is organized as follows. We first present some definitions and preliminary results in Section~\ref{sec:prelim}. The  sample complexity results are presented in Section~\ref{sec:results}. Finally, Section~\ref{sec:conc} concludes the paper and presents some directions for future research.

\section{Preliminary Material} \label{sec:prelim}

\subsection{Complexity Notation}
In this paper, we use complexity notation to concisely present the results in the case of small privacy budget $\epsilon \ll 1$. The privacy budget is made precise in the following subsections. We write $f(\epsilon)=\mathcal{O}(g(\epsilon))$ if there exist $\epsilon_0$ and $c>0$ such that $|f(\epsilon)|\leq c |g(\epsilon)|, \forall \epsilon\in(0,\epsilon_0)$. This is referred to as the big O notation and essentially establishes an asymptotic upper bound. For instance, $f(\epsilon)=\mathcal{O}(\epsilon^{-1})$ means $|f(\epsilon)|$ is upper bounded by $c/\epsilon$ for some appropriately chosen constant $c>0$ over a small enough region around zero. We write $f(\epsilon)=\Omega(g(\epsilon))$ if there exist $\epsilon_0$ and $c>0$ such that $|f(\epsilon)|\geq c |g(\epsilon)|, \forall \epsilon\in(0,\epsilon_0)$. This is referred to as the big Omega notation and establishes an asymptotic lower bound. For instance, $f(\epsilon)=\Omega(\log(\epsilon))$ implies that $|f(\epsilon)|$ is lower bounded by $c\log(\epsilon)$ for some appropriately chosen constant $c>0$ over a small enough region around zero. Finally, we write $f(\epsilon)=\Theta(g(\epsilon))$ if $f(\epsilon)=\mathcal{O}(g(\epsilon))$ and $f(\epsilon)=\Omega(g(\epsilon))$. This implies that $g$ sandwiches $f$ tightly (in terms of order, not constants).

\subsection{Density Operators}
The following definitions and preliminary results are adopted from~\cite{wilde2013quantum}. The set of linear operators from finite-dimensional Hilbert space $\H$ to itself is denoted by $\L(\H)$. The set of positive semi-definite linear operators is denoted by $\P(\H)\subset\L(\H)$. The set of density operators (i.e., positive semi-definite linear operators with unit trace) is denoted by $\S(\H)\subset\P(\H)$. 
Qubits, which stand for quantum bits, are  basic units of quantum information correspond to 2-dimensional Hilbert spaces. In the so-called Bloch sphere representation~\cite[p.\,105]{nielsen2010quantum}, the density operator $\rho$ for any qubit can be represented as
\begin{align}\label{eqn:qubit}
	\rho=\frac{1}{2}\left(I+\omega.\hat{\sigma}\right),
\end{align}
where $\omega\!=\!(\omega_x,\omega_y,\omega_z)\!\in\!\R^3$ is such that $\|\omega\|_2\!\leq\! 1$ (with $\|\omega\|_2^2\!=\!\omega^\top \omega$) and $\hat{\sigma}\!=\!(\hat{\sigma}_x,\hat{\sigma}_y, \hat{\sigma}_z)$ is the tuple of Pauli matrices
\begin{align*}
	\hat{\sigma}_x
	:=
	\begin{bmatrix}
		0 & 1\\
		1 & 0
	\end{bmatrix},\quad 
	\hat{\sigma}_y
	:=
	\begin{bmatrix}
		0 & -i\\
		i & 0
	\end{bmatrix},\quad 
	\hat{\sigma}_z
	:=
	\begin{bmatrix}
		1 & 0\\
		0 & -1
	\end{bmatrix}.
\end{align*}
Here, these matrices are represented in the so-called computational basis. Note that, in the Bloch sphere representation, the definition of the inner product is expanded to allow for $\omega.\hat{\sigma}
:=\omega_x\hat{\sigma}_x
+\omega_y\hat{\sigma}_y+
\omega_z\hat{\sigma}_z.$
A quantum channel, in its most general form, is a mapping on the space of density operators that is both completely positive and trace preserving. In the case of qubits, for each quantum channel $\mathcal{E}:\S(\H)\rightarrow\S(\H)$, there exist $A\in\R^{3\times 3}$ and $c\in\R^3$ such that 
\begin{align}
	\mathcal{E}(\rho)=\frac{1}{2}\left(I+(A\omega+c).\hat{\sigma}\right).\label{eqn:equivalence}
\end{align}
Note that it must be that $\|A\omega+c\|_2\leq 1$ for all $\omega$ such that $\|\omega\|_2\leq 1$. This is to ensure that the output $\mathcal{E}(\rho)$ is still a density operator. A necessary condition for this is that $\|c\|_2\leq 1$ (because $1\geq \|A\omega+c\|_2$ for $\omega=0$) and $\|A\|_2=\sigma_{\max}(A)\leq 2$ (because $1\geq \|A\omega+c\|_2\geq \|A\omega\|_2-\|c\|_2$).  Qudits are extensions of qubits to $d$-dimensional Hilbert spaces. In the (generalized) Bloch sphere representation (which is no longer of interest for visualization)~\cite{zhong2013fisher}, qudits can be represented as
\begin{align*}
	\rho=\frac{1}{d}I+\frac{1}{2}\omega.\hat\eta,
\end{align*}
where $\omega=(\omega_i)_1^{d^2-1}\in\R^{d^2-1}$ with $\|\omega\|_2\leq \sqrt{2(d-1)/d}$ and $\hat\eta=(\hat\eta_i)_1^{d^2-1}$ is the generator (i.e., orthonormal Hermitian operators with trace zero) of the Lie algebra $\mathfrak{su}(d)$. In the case of qudits, for each quantum channel $\mathcal{E}:\S(\H)\rightarrow\S(\H)$, there exist $A\in\R^{d^2-1\times d^2-1}$ and $c\in\R^{d^2-1}$ such that 
\begin{align}
	\mathcal{E}(\rho)=\frac{1}{d}I+\frac{1}{2}(A\omega+c).\hat\eta,
	\label{eqn:equivalence_2}
\end{align}
where, again by definition, $\|A\omega+c\|_2\leq \sqrt{2(d-1)/d}$ for all $\omega$ such that $\|\omega\|_2\leq \sqrt{2(d-1)/d}$. Given the equivalence relationships in~\eqref{eqn:equivalence} and~\eqref{eqn:equivalence_2}, we may abuse the notation by referring to quantum channel $\mathcal{E}$ with $(A,c)$.

\subsection{Quantum Fisher Information}
The following definitions and preliminary results are adopted from~\cite{zhong2013fisher}. Let density operator $\rho_\lambda\in\S(\H)$ depend on scalar parameter $\lambda\in\R$. Assume that $\rho_\lambda$ is continuously differentiable with respect to $\lambda$. The quantum Fisher information is
\begin{align}
	\mathcal{F}(\rho_\lambda)
	:=\trace(\rho_\lambda L_\lambda^2)
	=\trace\left(\left( \frac{\partial}{\partial\lambda} \rho_\lambda \right)L_\lambda\right),
\end{align}
where symmetric logarithmic derivative operator $L_\lambda\in\L(\H)$ is any Hermitian operator that satisfies
\begin{align*}
	\frac{\partial}{\partial\lambda} \rho_\lambda
	=\frac{1}{2}\left(\rho_\lambda L_\lambda+ L_\lambda \rho_\lambda\right).
\end{align*}
For qubits, this definition can be simplified\footnote{A keen reader may note that we have already used $\omega_x$, $\omega_y$, and $\omega_z$ to denote the elements of vector $\omega\in\R^3$ for qubits. For qudits, we used $\omega_i$ to denote the elements of $\omega\in\R^{d^2-1}$. By writing $\omega_\lambda$ to emphasize the dependence of $\omega$ on the parameter $\lambda$, we may run the risk of notation overload and confusion. However, to avoid confusion, in this paper, we use lowercase Greek letters, such as $\lambda$, to denote parameters and lowercase Roman letters and numbers to denote the coordinates.} to 
\begin{align}
	\mathcal{F}(\rho_\lambda)
	=
	\begin{cases}
		\|\partial_\lambda \omega_\lambda\|_2^2+\displaystyle \frac{|\braket{\omega_\lambda}{\partial_\lambda \omega_\lambda}|^2}{1-\|\omega_\lambda\|_2^2}, & \|\omega_\lambda\|_2<1,\\
		\|\partial_\lambda \omega_\lambda\|_2^2, & 
		\|\omega_\lambda\|_2=1,
	\end{cases}
\end{align}
where $\partial_\lambda \omega_\lambda=\partial \omega_\lambda/\partial \lambda$. Note that the quantum Fisher information is not necessarily continuous everywhere (particularly as $\|\omega_\lambda\|_2\rightarrow 1$)~\cite{vsafranek2017discontinuities}. For qudits, this definition can be simplified to 
\begin{align}
	\mathcal{F}(\rho_\lambda)
	\!=\!\!
	\begin{cases}
		(\partial_\lambda \omega_\lambda)^\top \mathcal{M}(\omega_\lambda)^{-1} (\partial_\lambda \omega_\lambda)^\top, & \!\!\!\!\|\omega_\lambda\|_2\!<\!\sqrt{\frac{2(d-1)}{d}},\\
		\|\partial_\lambda \omega_\lambda\|_2^2, & 
		\!\!\!\!\|\omega_\lambda\|_2\!=\!\sqrt{\frac{2(d-1)}{d}},
	\end{cases}
\end{align}
where $\mathcal{M}(\omega_\lambda)=(2/d)I-\omega_\lambda\omega_\lambda^\top +G(\omega_\lambda)$ with element in the $i$-th row and $j$-th column of $G(\omega_\lambda)\in\R^{(d^2-1)\times (d^2-1)}$ being equal to $G_{ij}=\frac{1}{4}\sum_{k=1}^{d^2-1}  \trace\left((\hat{\eta}_i\hat{\eta}_j+\hat{\eta}_j\hat{\eta}_i)\hat{\eta}_k\right) (\omega_\lambda)_k.$
Assume that we can gather measurements from $N\geq 1$ copies of $\rho_\lambda$, denoted by $\rho_\lambda^{\otimes N}$, by implementing a positive operator-valued measure (POVM). The measurement outcomes can be used to estimate parameter $\lambda$. Let $\hat{\lambda}$ denote any unbiased estimate of the parameter $\lambda$, that is, $\mathbb{E}\{\hat{\lambda}\}=\lambda$, where expectation is taken with respect to the randomness of the quantum measurement. The so-called
quantum Cram\'{e}r-Rao theorem implies that
\begin{align}\label{eqn:CRB}
	\mathbb{E}\{(\lambda-\hat{\lambda})^2\}\geq \frac{1}{N \mathcal{F}(\rho_\lambda)}.
\end{align}
In the scalar parameter case discussed above, the lower bound can be saturated~\cite{braunstein1994statistical, braunstein1996generalized}; see~\cite{10480691,ragy2016compatibility} for generalized saturability results. 

\begin{remark}
	Extension to biased estimators would follow a similar line of reasoning as in unbiased estimators presented here. To illustrate this, consider the set of biased estimators such that $\mathbb{E}\{\hat{\lambda}\}=\lambda+\zeta(\lambda)$ with $|\partial \zeta(\lambda)/\partial \lambda|\leq b<1$. For such biased estimators, the quantum Cram\'{e}r-Rao theorem can be extended~\cite{kardashin2025predicting} to get
	\begin{align*}
		\mathbb{E}\{(\lambda-\hat{\lambda})^2\}
		&\geq \frac{|\partial \mathbb{E}\{\hat{\lambda}\}/\partial \lambda|^2}{N \mathcal{F}(\rho_\lambda)}+|\mathbb{E}\{\hat{\lambda}\}-\lambda|^2\geq \frac{(1-b)^2}{N \mathcal{F}(\rho_\lambda)}.
	\end{align*}
	This is similar in essence to~\eqref{eqn:CRB} and therefore can be used in the proofs (see Theorem~\ref{thm:parameter_complexity_qubit}) to derive complexity bounds for biased estimators. 
\end{remark}

\subsection{Quantum Differential Privacy}
The following definitions and preliminary results are adopted from~\cite{hirche2023quantum,angrisani2022quantum}. The quantum local differential privacy~\cite{angrisani2022quantum} is akin to quantum differential privacy with the exception of removing the so-called ``neighboring quantum states''. Local differential privacy is a stronger or more robust approach to privacy, removing the need for a trusted curator~\cite{duchi2013local,angrisani2022quantum}. 

\begin{definition}
	For $\epsilon\geq 0$, quantum channel $\mathcal{E}:\S(\H)\rightarrow \S(\H)$ is $\epsilon$-locally differentially private  if 
	\begin{align}
		\trace(M\mathcal{E}(\rho))\leq e^\epsilon \trace(M\mathcal{E}(\sigma)),
	\end{align}
	for all operators $0\preceq M\preceq I$, where $A\preceq B$ means $B-A\in\P(\H)$, and all density operators $\rho,\sigma\in\S(\H)$. The set of all quantum channel that are $\epsilon$-locally differentially private is denoted by ${\rm LDP}_{\epsilon}$.
\end{definition}

For density operators $\rho,\sigma\in\S(\H)$, the quantum hockey-stick divergence is 
\begin{align}
	E_\gamma(\rho\|\sigma)=\frac{1}{2}\|\rho-\gamma\sigma\|_1+\frac{1}{2}(1-\gamma),
\end{align}
where $ \|M \|_{1}:=\trace(|M|)$ is the trace norm of operator $M\in\L(\H)$ and $|M|=\sqrt{M^\dag M}$. 

\begin{lemma} \label{lem:iff_QLDP}
	Quantum channel $\mathcal{E}\in {\rm LDP}_{\epsilon}$ if and only if $E_{e^\epsilon}(\mathcal{E}(\rho)\|\mathcal{E}(\sigma))= 0$ for all $\rho,\sigma\in\S(\H)$.
\end{lemma}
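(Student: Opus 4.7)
The plan is to convert the hockey-stick divergence into a variational expression over effects $0\preceq M\preceq I$ and recognize the quantum LDP inequality inside it. The key tool is the standard trace-norm identity for Hermitian operators: for any Hermitian $H\in\L(\H)$,
\begin{align*}
\|H\|_1 = 2\max_{0\preceq M\preceq I}\trace(MH) - \trace(H),
\end{align*}
which follows from decomposing $H=H_+-H_-$ and observing that the maximum is attained by $M=\Pi_+$, the projector onto the positive eigenspace of $H$.

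First I would apply this identity to $H=\mathcal{E}(\rho)-e^\epsilon\mathcal{E}(\sigma)$. Since quantum channels preserve Hermiticity and trace, $H$ is Hermitian with $\trace(H)=1-e^\epsilon$. Substituting into the definition of $E_\gamma$ and simplifying cancels the $\frac{1}{2}(1-e^\epsilon)$ term, yielding
\begin{align*}
E_{e^\epsilon}(\mathcal{E}(\rho)\|\mathcal{E}(\sigma))
= \max_{0\preceq M\preceq I}\bigl[\trace(M\mathcal{E}(\rho)) - e^\epsilon\trace(M\mathcal{E}(\sigma))\bigr].
\end{align*}
Note that this maximum is always nonnegative, as the choice $M=0$ achieves the value $0$.

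From this reformulation both directions are essentially immediate. If $\mathcal{E}\in{\rm LDP}_\epsilon$, then by definition the bracketed expression is $\leq 0$ for every admissible $M$ and every $\rho,\sigma$, so combined with the nonnegativity of the maximum we conclude $E_{e^\epsilon}(\mathcal{E}(\rho)\|\mathcal{E}(\sigma))=0$. Conversely, if the hockey-stick divergence vanishes for all $\rho,\sigma$, then in particular $\trace(M\mathcal{E}(\rho))-e^\epsilon\trace(M\mathcal{E}(\sigma))\leq 0$ for every $0\preceq M\preceq I$, which is precisely the $\epsilon$-LDP condition.

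The only nontrivial step is establishing the variational formula for the trace norm; once this is in hand, the proof is a one-line substitution followed by interpretation. I do not anticipate any real obstacle, since $H$ is Hermitian (so $|H|$ has a clean spectral decomposition) and the trace-preserving property of $\mathcal{E}$ makes the constant term in $E_\gamma$ cancel exactly against $-\trace(H)$ from the variational identity.
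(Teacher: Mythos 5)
Your proposal is correct. The variational identity $\|H\|_1 = 2\max_{0\preceq M\preceq I}\trace(MH) - \trace(H)$ for Hermitian $H$ is standard and correctly applied: since $\mathcal{E}$ is trace preserving, $\trace(H) = 1-e^\epsilon$ for $H=\mathcal{E}(\rho)-e^\epsilon\mathcal{E}(\sigma)$, the additive constant in $E_\gamma$ cancels exactly, and the resulting expression $E_{e^\epsilon}(\mathcal{E}(\rho)\|\mathcal{E}(\sigma))=\max_{0\preceq M\preceq I}[\trace(M\mathcal{E}(\rho))-e^\epsilon\trace(M\mathcal{E}(\sigma))]$ makes both directions of the equivalence immediate. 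The difference from the paper is one of presentation rather than substance: the paper disposes of this lemma in one line by citing Lemma~III.2 of the Hirche et al.\ reference with $\delta=0$, which is precisely the $(\epsilon,\delta)$-DP characterization via the hockey-stick divergence, and whose proof in that reference rests on the same variational formula you use. What your version buys is self-containedness --- the reader sees exactly why the constant $\frac{1}{2}(1-\gamma)$ in the definition of $E_\gamma$ is calibrated so that the divergence vanishing is equivalent to the one-sided operator inequality --- at the cost of a few extra lines. Either route is acceptable; yours would be the natural choice if the paper wished to avoid the external dependency.
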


\begin{proof}
	The proof follows from~\cite[Lemma~III.2]{hirche2023quantum} by setting $\delta=0$.
\end{proof}

We can prove the following lemma for differentially private quantum channels acting on qubits. This results, particularly the ``only if'' part, plays a pivotal role in establishing the sample complexity bounds in the next section. 

\begin{lemma} \label{lem:iff_QLDP_qubit}
	Let $\dim(\H)=2$. $(A,c)\!\in\!{\rm LDP}_{\epsilon}$ if~and~only~if
	\begin{align}
		\|A(\omega-\nu)+(1-e^\epsilon)(A\nu+c)\|_2\leq e^\epsilon-1,  
	\end{align}
	for all $\omega,\nu\in\R^3$ such that $\|\omega\|\leq 1$ and $\|\nu\|\leq 1$.
\end{lemma}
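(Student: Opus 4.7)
The plan is to combine Lemma~\ref{lem:iff_QLDP} with the Bloch-sphere parametrization~\eqref{eqn:equivalence} to recast the condition $E_{e^\epsilon}(\mathcal{E}(\rho)\|\mathcal{E}(\sigma))=0$ as an inequality on a vector in $\R^3$. First I would invoke Lemma~\ref{lem:iff_QLDP}: $(A,c)\in{\rm LDP}_\epsilon$ is equivalent to
\[
\tfrac{1}{2}\|\mathcal{E}(\rho)-e^\epsilon\mathcal{E}(\sigma)\|_1+\tfrac{1}{2}(1-e^\epsilon)=0
\]
for every pair of density operators $\rho,\sigma\in\S(\H)$. Since qubit density operators are in one-to-one correspondence with Bloch vectors $\omega,\nu\in\R^3$ satisfying $\|\omega\|_2,\|\nu\|_2\leq 1$, the quantification over $\rho,\sigma$ becomes a quantification over such $\omega,\nu$.

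Next, using~\eqref{eqn:equivalence}, a direct calculation yields
\[
\mathcal{E}(\rho)-e^\epsilon\mathcal{E}(\sigma)=\tfrac{1}{2}(1-e^\epsilon)I+\tfrac{1}{2}v.\hat{\sigma},
\]
where I set $v:=A(\omega-\nu)+(1-e^\epsilon)(A\nu+c)$; this identity is obtained by expanding $A\omega+c-e^\epsilon(A\nu+c)$ and then adding and subtracting $A\nu$. Any operator on a qubit of the form $\alpha I+w.\hat{\sigma}$ is Hermitian with spectrum $\{\alpha-\|w\|_2,\alpha+\|w\|_2\}$, so its trace norm equals $|\alpha+\|w\|_2|+|\alpha-\|w\|_2|=2\max(|\alpha|,\|w\|_2)$. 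Applying this with $\alpha=(1-e^\epsilon)/2\leq 0$ and $w=v/2$ gives
\[
\|\mathcal{E}(\rho)-e^\epsilon\mathcal{E}(\sigma)\|_1=\max\!\left(e^\epsilon-1,\|v\|_2\right).
\]

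Plugging this back into the hockey-stick expression turns the vanishing condition into $\tfrac{1}{2}\max(e^\epsilon-1,\|v\|_2)+\tfrac{1}{2}(1-e^\epsilon)=0$, which holds if and only if $\|v\|_2\leq e^\epsilon-1$; that is exactly the inequality stated in the lemma. Quantifying over all admissible Bloch vectors $\omega,\nu$ closes both directions of the ``if and only if''. The only non-routine step is the trace-norm evaluation, but since the operator is already presented in the Bloch basis its eigenvalues are read off immediately; I therefore expect the main hurdle to be purely the bookkeeping that collects the expansion into the specific combination $A(\omega-\nu)+(1-e^\epsilon)(A\nu+c)$ rather than the equivalent $A(\omega-e^\epsilon\nu)+(1-e^\epsilon)c$.
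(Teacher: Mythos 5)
Your proposal is correct and follows essentially the same route as the paper: invoke Lemma~\ref{lem:iff_QLDP}, pass to the Bloch parametrization, and evaluate the trace norm of $\mathcal{E}(\rho)-e^\epsilon\mathcal{E}(\sigma)$ via the eigenvalues of $mI+n.\hat{\sigma}$ (the paper isolates this as Lemma~A in the appendix). The only cosmetic difference is that you compress the paper's explicit two-case analysis into the identity $|a+b|+|a-b|=2\max(|a|,|b|)$, which yields the same conclusion $E_{e^\epsilon}(\mathcal{E}(\rho)\|\mathcal{E}(\sigma))=\max\{0,\tfrac{1}{2}\|v\|_2+\tfrac{1}{2}(1-e^\epsilon)\}$ slightly more cleanly.
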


\begin{proof}
	Let $\rho=(I + \omega.\hat{\sigma})/2$ and $
	\sigma=(I+\nu.\hat{\sigma})/2$.
	Therefore, $\mathcal{E}(\rho)=(I + \bar{\omega}.\hat{\sigma})/2$ and $
	\mathcal{E}(\sigma)=(I+\bar{\nu}.\hat{\sigma})/2$,
	where $\bar{\omega}=(A\omega+c)$ and $\bar{\nu}=(A\nu+c)$. 
	Note that
	\begin{align}
		\|\mathcal{E}(\rho)-e^\epsilon\mathcal{E}(\sigma)\|_1
		=&
		\frac{1}{2}\left\|(1-e^\epsilon)
		+(\bar\omega-e^\epsilon\bar\nu).
		\hat{\sigma}\right\|_1
		\nonumber
		\\
		=&
		\frac{1}{2}\left|(1-e^\epsilon)+\|\bar\omega-e^\epsilon\bar\nu\|_2\right|
		\nonumber
		\\
		&+\frac{1}{2}\left|(1-e^\epsilon)-\|\bar\omega-e^\epsilon\bar\nu\|_2\right|,
		\label{eqn:proof:0}
	\end{align}
	where the second equality follows from Lemma~\ref{lem:useful} in the appendix. Because $e^\epsilon\geq 1$ (or equivalently $1-e^\epsilon\leq 0$) for all $\epsilon\geq 0$, we have
	\begin{align}
		|(1-e^\epsilon)-&\|\bar\omega-e^\epsilon\bar\nu\|_2|
		=
		\|\bar\omega-e^\epsilon\bar\nu\|_2-(1-e^\epsilon).\label{eqn:proof:1}
	\end{align}
	We analyze the other term for the following two cases.
	\begin{itemize}
		\item Case~I: Assume that $\|\bar\omega-e^\epsilon\bar\nu\|_2\geq -(1-e^\epsilon)$. In this case, we have
		\begin{align}
			|(1\!-\!e^\epsilon)\!+\! \|\bar\omega\!-\!e^\epsilon\bar\nu\|_2|
			=
			\|\bar\omega-e^\epsilon\bar\nu\|_2\!+\!(1\!-\!e^\epsilon).\label{eqn:proof:2}
		\end{align}
		Combining~\eqref{eqn:proof:1} and~\eqref{eqn:proof:2} with~\eqref{eqn:proof:0}, we get
		$\|\bar\rho-e^\epsilon\bar\sigma\|_1=\|\bar\omega-e^\epsilon\bar\nu\|_2,$
		which results in
		\begin{align}
			E_{e^\epsilon}(\mathcal{E}(\rho)\|\mathcal{E}(\sigma))
			=&\frac{1}{2}\|\bar\omega-e^\epsilon\bar\nu\|_2
			+\frac{1}{2}(1-e^\epsilon).
			\label{eqn:proof:3}
		\end{align}
		\item Case~II: Assume that $\|\bar\omega-e^\epsilon\bar\nu\|_2< -(1-e^\epsilon)$. In this case, we have
		\begin{align}
			|(1\!-\!e^\epsilon)\!+\! \|\bar\omega\!-\!e^\epsilon\bar\nu\|_2|
			=
			-\|\bar\omega\!-
			\!e^\epsilon\bar\nu\|_2\!-\!(1\!-\!e^\epsilon).\label{eqn:proof:4}
		\end{align}
		Combining~\eqref{eqn:proof:1} and~\eqref{eqn:proof:4} with~\eqref{eqn:proof:0}, we get
		$\|\bar\rho-e^\epsilon\bar\sigma\|_1
		=-(1-e^\epsilon),$
		which results in
		\begin{align}
			E_{e^\epsilon}(\mathcal{E}(\rho)\|\mathcal{E}(\sigma))
			=0.
			\label{eqn:proof:5}
		\end{align}
	\end{itemize}
	Combining Case~I, i.e.,~\eqref{eqn:proof:3}, and Case~II, i.e.,~\eqref{eqn:proof:5}, shows that 
	\begin{align*}
		E_{e^\epsilon}(\mathcal{E}(\rho)\|\mathcal{E}(\sigma))
		=\max\left\{0,\frac{1}{2}\|\bar\omega-e^\epsilon\bar\nu\|_2+\frac{1}{2}(1-e^\epsilon)\right\}.
	\end{align*}
	and, as a result,
	\begin{align*}
		\sup_{\rho,\sigma\in\S(\H)}E_{e^\epsilon}&(\mathcal{E}(\rho)\|\mathcal{E}(\sigma))
		\\&=\max\left\{0,\max_{\bar\omega,\bar\nu}\frac{1}{2}\|\bar\omega-e^\epsilon\bar\nu\|_2+\frac{1}{2}(1-e^\epsilon)\right\}.
	\end{align*}
	Therefore, Lemma~\ref{lem:iff_QLDP} implies that $\mathcal{E}\in{\rm LDP}_{\epsilon}$ if and only if $\|\bar\omega-e^\epsilon\bar\nu\|_2\leq e^\epsilon-1$ for all $\bar\omega,\bar\nu$.
\end{proof}

With these results in hand, we are ready to present sample complexity bounds for scalar parameter estimation under quantum differential privacy. 

\section{Parameter Estimation Sample Complexity}
\label{sec:results}

We first define sample complexity for parameter estimation based on multiple copies of quantum states. 

\begin{definition}[Sample Complexity] For quantum channel $\mathcal{E}$, the minimum number of samples required for obtaining estimation accuracy of $\alpha>0$ is 
	\begin{align*}
		N_{\alpha}(\mathcal{E})\!= \!\!\!\!\inf_{
				\hat{\lambda}:\mathbb{E}\{\hat\lambda \}=\lambda}
			\!\!\inf
			\Big\{N:\mathbb{E}&\{(\lambda\!-\!\hat{\lambda})^2\}\leq \alpha 
			\\[-1em]&
			\mbox{ based on } \mathcal{E}(\rho_\lambda)^{\otimes N}\!\Big\}.
		\end{align*}
\end{definition}
	
The inner infimum captures the smallest number of samples or copies that meet the pre-specified error bound on parameter estimation, i.e., the smallest member of the set $\{N:\mathbb{E}\{(\lambda-\hat{\lambda})^2\}\leq \alpha \mbox{ based on } \mathcal{E}(\rho_\lambda)^{\otimes N}\}$, while the outer infimum captures the best unbiased estimator. Now, we can present the main result of this paper regarding sample complexity of scalar parameter estimation under quantum differential privacy for qubits. 
	
\begin{theorem} \label{thm:parameter_complexity_qubit}
		Let $\dim(\H)=2$. Assume that $\braket{\partial_\lambda \omega_\lambda}{\omega_\lambda}\neq 0$. Then,
		\begin{align*}
			\frac{C_1}{\alpha (e^\epsilon-1)^2} \leq  \inf_{\mathcal{E}\in{\rm LDP}_{\epsilon}}N_{\alpha}(\mathcal{E}) \leq 
			\frac{C_2(e^\epsilon+1)^2}{\alpha (e^\epsilon-1)^2},
		\end{align*}
		where
		\begin{align*}
			C_1&=\frac{1}{\|\partial_\lambda \omega_\lambda\|_2^2}\left(4+\frac{1}{4}\frac{\|\partial_\lambda \omega_\lambda\|_2^2}{|\braket{\partial_\lambda \omega_\lambda}{\omega_\lambda}|^2} \right)^{-1},\\
			C_2&=\frac{1}{\|\partial_\lambda \omega_\lambda\|_2^2}.
		\end{align*}
		Particularly, $N_{\alpha,\epsilon}=\Theta\left(\alpha^{-1}\epsilon^{-2}\right)$ for $\epsilon\ll 1$.
\end{theorem}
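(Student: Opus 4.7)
The plan is to use the quantum Cram\'{e}r-Rao bound in~\eqref{eqn:CRB} together with its saturability in the scalar setting to reduce the problem to a Fisher information maximization; the inner infimum in the definition of $N_\alpha(\mathcal{E})$ is essentially $\lceil 1/(\alpha\mathcal{F}(\mathcal{E}(\rho_\lambda)))\rceil$, so the theorem is equivalent to a two-sided bound on $\sup_{\mathcal{E}\in{\rm LDP}_\epsilon}\mathcal{F}(\mathcal{E}(\rho_\lambda))$.

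For the upper bound on sample complexity (the right inequality), I would exhibit an explicit LDP channel that achieves enough Fisher information. The natural candidate is the qubit depolarizing channel represented by $(A,c)=(pI,0)$ with $p=(e^\epsilon-1)/(e^\epsilon+1)$; Lemma~\ref{lem:iff_QLDP_qubit} confirms LDP membership, with the worst case $\omega=-\nu$ yielding exactly $p(e^\epsilon+1)\|\omega\|_2\le e^\epsilon-1$. Substituting $\bar\omega_\lambda=p\omega_\lambda$ and $\partial_\lambda \bar\omega_\lambda=p\,\partial_\lambda \omega_\lambda$ into the qubit Fisher information formula and dropping the non-negative second term gives $\mathcal{F}(\mathcal{E}(\rho_\lambda))\ge p^2\|\partial_\lambda\omega_\lambda\|_2^2$, from which $N_\alpha\le 1/(\alpha\mathcal{F})\le C_2(e^\epsilon+1)^2/(\alpha(e^\epsilon-1)^2)$ follows.

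For the lower bound on sample complexity (the left inequality), I would establish a universal upper bound on $\mathcal{F}(\mathcal{E}(\rho_\lambda))$ over all $(A,c)\in{\rm LDP}_\epsilon$. The starting point is the operator-norm bound $\|A\|_{\mathrm{op}}\le 2(e^\epsilon-1)/(e^\epsilon+1)$, obtained from Lemma~\ref{lem:iff_QLDP_qubit} by testing with $\omega=-\nu$ and using $\|c\|_2\le 1$. This immediately controls the first Fisher term, $\|A\partial_\lambda\omega_\lambda\|_2^2\le 4(e^\epsilon-1)^2\|\partial_\lambda\omega_\lambda\|_2^2$, which accounts for the constant $4$ in $C_1^{-1}$. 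The second Fisher term requires more care: combining Cauchy--Schwarz with the same operator-norm bound, and using Lemma~\ref{lem:iff_QLDP_qubit} once more to lower-bound $1-\|A\omega_\lambda+c\|_2^2$ in terms of $\|\omega_\lambda\|_2$ and $\|c\|_2$, should produce the additive term $(e^\epsilon-1)^2\|\partial_\lambda\omega_\lambda\|_2^4/(4|\braket{\omega_\lambda}{\partial_\lambda\omega_\lambda}|^2)$ matching the second contribution in $C_1^{-1}$.

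The hardest step will be handling the second Fisher term, $|\braket{A\omega_\lambda+c}{A\partial_\lambda\omega_\lambda}|^2/(1-\|A\omega_\lambda+c\|_2^2)$: its denominator can in principle vanish when the output state approaches purity, and absent the privacy constraint this supremum is infinite. The assumption $\braket{\partial_\lambda\omega_\lambda}{\omega_\lambda}\ne 0$ is precisely the non-degeneracy condition that keeps the resulting bound finite, while the LDP constraint from Lemma~\ref{lem:iff_QLDP_qubit} forces the output quantitatively away from the boundary of the Bloch sphere. Once both inequalities are in place, the $\Theta(\alpha^{-1}\epsilon^{-2})$ scaling for $\epsilon\ll 1$ is immediate from the expansions $e^\epsilon-1=\epsilon+O(\epsilon^2)$ and $e^\epsilon+1=2+O(\epsilon)$, under which both $C_1/(e^\epsilon-1)^2$ and $C_2(e^\epsilon+1)^2/(e^\epsilon-1)^2$ scale as $\Theta(\epsilon^{-2})$.
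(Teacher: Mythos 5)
Your upper bound is essentially the paper's argument: the depolarizing channel with shrinkage factor $(e^\epsilon-1)/(e^\epsilon+1)$, verification of LDP membership, the Fisher information lower bound $p^2\|\partial_\lambda\omega_\lambda\|_2^2$, and saturability of the scalar quantum Cram\'er--Rao bound to convert Fisher information into sample count. The first Fisher term in the lower bound is also handled correctly via an operator-norm bound on $A$ (the paper obtains the slightly weaker $\|A\|_2\le 2(e^\epsilon-1)$ by taking $\nu=0$ in Lemma~\ref{lem:iff_QLDP_qubit}; your $\omega=-\nu$ variant is fine and even a bit tighter).

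The gap is in your treatment of the second Fisher term $|\braket{A\omega_\lambda+c}{A\partial_\lambda\omega_\lambda}|^2/(1-\|A\omega_\lambda+c\|_2^2)$. You propose to bound the numerator by Cauchy--Schwarz and to separately lower-bound the denominator ``in terms of $\|\omega_\lambda\|_2$ and $\|c\|_2$,'' asserting that the LDP constraint forces the output quantitatively away from the Bloch-sphere boundary. It does not: the constant channel $A=0$, $\|c\|_2=1$ is in ${\rm LDP}_\epsilon$ for every $\epsilon\ge 0$ and maps every input to a pure output, and small perturbations of it ($A=\delta I$ with $\|c\|_2=1-O(\delta)$) remain in ${\rm LDP}_\epsilon$ while making $1-\|A\omega_\lambda+c\|_2^2$ arbitrarily small. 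Hence no uniform positive lower bound on the denominator exists over ${\rm LDP}_\epsilon$, and the numerator and denominator must be controlled \emph{jointly}. That joint control is the one genuinely nontrivial idea in the paper's proof: apply Lemma~\ref{lem:iff_QLDP_qubit} to the pair $\omega=\omega_\lambda$, $\nu=\omega_\lambda-\beta\,\partial_\lambda\omega_\lambda$ with $\beta=\braket{\partial_\lambda\omega_\lambda}{\omega_\lambda}/\|\partial_\lambda\omega_\lambda\|_2^2$ (and the swapped pair), expand the square, drop the nonnegative $\beta^2$ term, and divide by $\beta$ to get a bound directly on the ratio $|\braket{A\omega_\lambda+c}{A\partial_\lambda\omega_\lambda}|/(1-\|A\omega_\lambda+c\|_2^2)\le \tfrac{e^\epsilon-1}{2}\,\|\partial_\lambda\omega_\lambda\|_2^2/|\braket{\partial_\lambda\omega_\lambda}{\omega_\lambda}|$; squaring and multiplying back by $1-\|A\omega_\lambda+c\|_2^2\le 1$ yields the second contribution to $C_1^{-1}$. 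This perturbation-along-$\partial_\lambda\omega_\lambda$ step is also where the hypothesis $\braket{\partial_\lambda\omega_\lambda}{\omega_\lambda}\neq 0$ actually enters (one divides by $\beta$); your proposal invokes the hypothesis but never uses it in a computation. Without this step the lower bound does not close.
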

	
\begin{proof}
		\textit{Proving the Lower Bound on $N_{\alpha,\epsilon}$:} We prove three important inequalities that enable  bounding the quantum Fisher information.  For the first inequality, let $\omega=\omega_\lambda$ and 
		$$
		\nu=\omega_\lambda-\underbrace{\frac{\braket{\partial_\lambda \omega_\lambda}{\omega_\lambda}}{\|\partial_\lambda \omega_\lambda\|_2^2}}_{:=\beta}\partial_\lambda \omega_\lambda.
		$$
		We have
		$\|\nu\|_2^2=\|\omega_\lambda\|_2^2-|\braket{\partial_\lambda \omega_\lambda}{\omega_\lambda}|^2/\|\partial_\lambda\omega_\lambda\|_2^2\leq \|\omega_\lambda\|_2^2\leq 1.
		$
		Substituting $\omega$ and $\nu$ in Lemma~\ref{lem:iff_QLDP_qubit} results in
		\begin{align*}
			(e^\epsilon-1)^2
			\geq &\|\beta A\partial_\lambda \omega_\lambda+(1-e^\epsilon)(A\omega_\lambda-\beta A\partial_\lambda \omega_\lambda+c)\|_2^2\\
			=&\|(1-e^\epsilon)(A\omega_\lambda+c)
			+\beta e^\epsilon A\partial_\lambda \omega_\lambda\|_2^2\\
			=&(1-e^\epsilon)^2\|A\omega_\lambda+c\|_2^2
			+\beta^2e^{2\epsilon}\|A\partial_\lambda \omega_\lambda\|_2^2\\
			&-2\beta(e^\epsilon-1)e^{\epsilon}\braket{A\omega_\lambda+c}{A\partial_\lambda \omega_\lambda}\\
			\geq& (1-e^\epsilon)^2\|A\omega_\lambda+c\|_2^2\\
			&-2\beta(e^\epsilon-1)e^{\epsilon}\braket{A\omega_\lambda+c}{A\partial_\lambda \omega_\lambda}.
		\end{align*}
		Noting that $e^\epsilon\geq 1$, we get
		\begin{align}
			-\beta\braket{A\omega_\lambda+c}{A\partial_\lambda \omega_\lambda}
			&\leq \frac{1}{e^\epsilon} \frac{e^\epsilon-1}{2}(1-\|A\omega_\lambda+c\|_2^2)\nonumber\\
			&\leq \frac{e^\epsilon-1}{2}(1-\|A\omega_\lambda+c\|_2^2).
			\label{eqn:inequality_proof_1}
		\end{align}
		For the second inequality, let $\nu=\omega_\lambda$ and 
		$$
		\omega=\omega_\lambda-\underbrace{\frac{\braket{\partial_\lambda \omega_\lambda}{\omega_\lambda}}{\|\partial_\lambda \omega_\lambda\|_2^2}}_{:=\beta}\partial_\lambda \omega_\lambda.
		$$
		We have
		$\|\omega\|_2^2=\|\omega_\lambda\|_2^2-|\braket{\partial_\lambda \omega_\lambda}{\omega_\lambda}|^2/\|\partial_\lambda\omega_\lambda\|_2^2\leq \|\omega_\lambda\|_2^2\leq 1.
		$
		Substituting $\omega$ and $\nu$ in Lemma~\ref{lem:iff_QLDP_qubit} results in
		\begin{align*}
			(e^\epsilon-1)^2
			\geq &\|-\beta A\partial_\lambda \omega_\lambda+(1-e^\epsilon)(A\omega_\lambda+c)\|_2^2\\
			=&(1-e^\epsilon)^2\|A\omega_\lambda+c\|_2^2
			+\beta^2\|A\partial_\lambda \omega_\lambda\|_2^2\\
			&+2\beta(e^\epsilon-1)\braket{A\omega_\lambda+c}{A\partial_\lambda \omega_\lambda}\\
			\geq& (1-e^\epsilon)^2\|A\omega_\lambda+c\|_2^2\\
			&+2\beta(e^\epsilon-1)\braket{A\omega_\lambda+c}{A\partial_\lambda \omega_\lambda}.
		\end{align*}
		Thus
		\begin{align}
			\beta\braket{A\omega_\lambda+c}{A\partial_\lambda \omega_\lambda}
			&\leq \frac{e^\epsilon-1}{2}(1-\|A\omega_\lambda+c\|_2^2).
			\label{eqn:inequality_proof_2}
		\end{align}
		Combining~\eqref{eqn:inequality_proof_1} and~\eqref{eqn:inequality_proof_2} while recalling definition of $\beta$, we get
		\begin{align*}
			\Bigg|\frac{\braket{\partial_\lambda \omega_\lambda}{\omega_\lambda}}{\|\partial_\lambda \omega_\lambda\|_2^2}&\braket{A\omega_\lambda+c}{A\partial_\lambda \omega_\lambda}\Bigg|
			\!\leq\! \frac{e^\epsilon\!-\!1}{2}(1\!-\!\|A\omega_\lambda+c\|_2^2),
		\end{align*}
		and hence
		\begin{align}
			\frac{|\braket{A\omega_\lambda+c}{A\partial_\lambda \omega_\lambda}|}{(1-\|A\omega_\lambda+c\|_2^2)}
			\leq \frac{e^\epsilon-1}{2}\frac{\|\partial_\lambda \omega_\lambda\|_2^2}{|\braket{\partial_\lambda \omega_\lambda}{\omega_\lambda}|}.
			\label{eqn:inequality_proof_1_and_2}
		\end{align}
		For the third inequality, let $\nu=0$. We get $e^\epsilon-1\geq \|A\omega+(1-e^\epsilon)c\|_2\geq \|A\omega\|_2-(e^\epsilon-1)\|c\|_2$ and thus, it must be that $\|A\omega\|_2\leq (e^\epsilon-1)(1+\|c\|)\leq 2(e^\epsilon-1)$, which gives
		\begin{align}\label{eqn:inequality_proof_3}
			\|A\|_2=\sup_{\|\omega\|_2=1}\|A\omega\|_2
			\leq 2(e^\epsilon-1).
		\end{align}
		Now, we are ready to bound quantum Fisher information:
		\begin{align*}
			\mathcal{F}(\mathcal{E}(\rho_\lambda))
			\leq & \|A\partial_\lambda \omega_\lambda\|_2^2+\displaystyle \frac{|\braket{A\omega_\lambda+c}{A\partial_\lambda \omega_\lambda}|^2}{1-\|A\omega_\lambda+c\|_2^2}\nonumber\\
			\leq & 4(e^\epsilon-1)^2\|\partial_\lambda \omega_\lambda\|_2^2
			\nonumber\\
			&+\frac{(e^\epsilon-1)^2}{4}\frac{\|\partial_\lambda \omega_\lambda\|_2^4}{|\braket{\partial_\lambda \omega_\lambda}{\omega_\lambda}|^2}(1-\|A\omega_\lambda+c\|_2^2)\\
			\leq & 4(e^\epsilon-1)^2\|\partial_\lambda \omega_\lambda\|_2^2
			\left(1+\frac{1}{16}\frac{\|\partial_\lambda \omega_\lambda\|_2^2}{|\braket{\partial_\lambda \omega_\lambda}{\omega_\lambda}|^2} \right).
		\end{align*}
		Finally, from the quantum Cram\'{e}r-Rao bound, we get
		\begin{align*}
			\alpha\geq &\mathbb{E}\{(\lambda-\hat{\lambda})^2\}\\
			\geq& \frac{1}{N }\frac{1}{4(e^\epsilon-1)^2\|\partial_\lambda \omega_\lambda\|_2^2}\left(1+\frac{1}{16}\frac{\|\partial_\lambda \omega_\lambda\|_2^2}{|\braket{\partial_\lambda \omega_\lambda}{\omega_\lambda}|^2} \right)^{-1}.
		\end{align*}
		
		\textit{Proving the Upper Bound on $N_{\alpha,\epsilon}$:} Select $\mathcal{E}(\rho)=\frac{p}{2}I+(1-p)\rho.$
		This is the so-called global depolarizing channel. From Lemma~IV.2 in~\cite{hirche2023quantum}, we know that $\mathcal{E}\in{\rm LDP}_\epsilon$ if $p=2/(1+e^\epsilon)$. We have $\mathcal{F}(\mathcal{E}(\rho_\lambda))
		\geq   (1-p)^2\|\partial_\lambda \omega_\lambda\|_2^2
		= ((e^\epsilon-1)^2/(e^\epsilon+1)^2)\|\partial_\lambda \omega_\lambda\|_2^2.$
		From\cite{braunstein1994statistical, braunstein1996generalized}, we know that, in the case of scalar parameters, there exists an unbiased estimator for which the quantum Cram\'{e}r-Rao bound is saturated. That is, $\alpha=\mathbb{E}\{(\lambda-\hat{\lambda})^2\}=1/(N \mathcal{F}(\mathcal{E}(\rho_\lambda))).$ Therefore, for this specific unbiased estimator, we get $
		N\leq ((e^\epsilon+1)^2/(e^\epsilon-1)^2) /(\alpha\|\partial_\lambda \omega_\lambda\|_2^2).$
		This concludes the proof.
\end{proof}
	
Note that Theorem~\ref{thm:parameter_complexity_qubit} bounds the best attainable sample complexity, i.e., the number of samples required for the best\footnote{\label{foot:1}The problem of characterizing the worst-case sample complexity of parameter estimation under quantum differential privacy is ill-defined. To observe this, note that a constant quantum channel $\mathcal{E}:\rho_\lambda\mapsto(I/d)$ is differentially-private for all $\epsilon>0$ but, irrespective of how many copies of $\mathcal{E}(\rho_\lambda)$ is available, the underlying parameter $\lambda$ never can  be recovered.} differentially-private channel. This result shows that, as the privacy guarantees strengthen $\epsilon\rightarrow 0$, the sample complexity grows quadratically with $\epsilon$. That is, if privacy budget is halved, we need four times as many samples or copies of the quantum state to be able to attain the same level of accuracy. This observation provides a concrete interpretation for the privacy budget in quantum differential privacy.
	
\begin{corollary} \label{cor:1}
		Under the assumptions of Theorem~\ref{thm:parameter_complexity_qubit},
		\begin{align*}
			\frac{C_1}{9\alpha\epsilon^2} \leq  \inf_{\mathcal{E}\in{\rm LDP}_{\epsilon}}N_{\alpha}(\mathcal{E}) \leq 
			\frac{C_2(e+1)^2}{\alpha\epsilon^2},
		\end{align*}
		for all $\epsilon\in(0,1)$.
\end{corollary}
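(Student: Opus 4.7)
The plan is to derive Corollary~\ref{cor:1} directly from Theorem~\ref{thm:parameter_complexity_qubit} by replacing the quantities $(e^\epsilon-1)$ and $(e^\epsilon+1)$ with elementary bounds that are valid on the interval $(0,1)$. Since the constants $C_1$ and $C_2$ carry through unchanged, only the $\epsilon$-dependent factors need to be handled.

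For the lower bound, I would upper bound $(e^\epsilon-1)^2$ in terms of $\epsilon^2$. The key inequality is $e^\epsilon-1\le 3\epsilon$ for $\epsilon\in(0,1)$. One clean way to see this: by the mean-value theorem (or convexity of $e^\epsilon-1$ on $[0,1]$), one has $e^\epsilon-1\le e^\epsilon\cdot\epsilon\le e\cdot\epsilon<3\epsilon$. Squaring gives $(e^\epsilon-1)^2\le 9\epsilon^2$, and substituting into the lower bound from Theorem~\ref{thm:parameter_complexity_qubit} yields
\begin{align*}
\inf_{\mathcal{E}\in{\rm LDP}_\epsilon}N_\alpha(\mathcal{E})\ge \frac{C_1}{\alpha(e^\epsilon-1)^2}\ge \frac{C_1}{9\alpha\epsilon^2}.
\end{align*}

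For the upper bound, I would bound the ratio $(e^\epsilon+1)^2/(e^\epsilon-1)^2$ by $(e+1)^2/\epsilon^2$. This follows from two elementary facts: (i) $e^\epsilon+1\le e+1$ for $\epsilon\in(0,1)$ since $e^\epsilon$ is increasing, and (ii) $e^\epsilon-1\ge \epsilon$ for all $\epsilon\ge 0$, which is the standard inequality $e^x\ge 1+x$. Combining these two inequalities and squaring gives the desired bound, and substituting into the upper bound from Theorem~\ref{thm:parameter_complexity_qubit} completes the argument.

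There is no real obstacle here: the corollary is essentially a restatement of Theorem~\ref{thm:parameter_complexity_qubit} with the asymptotic $\Theta(\alpha^{-1}\epsilon^{-2})$ scaling made quantitative over the explicit range $\epsilon\in(0,1)$. The only mild choice is which convexity/Taylor-style estimate to use for $e^\epsilon-1\le 3\epsilon$; using $e\le 3$ makes the constant $9$ in the denominator transparent and avoids introducing $(e-1)^2$, which would be tighter but less readable.
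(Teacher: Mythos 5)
Your proposal is correct and follows essentially the same route as the paper: substitute elementary bounds for $(e^\epsilon-1)$ and $(e^\epsilon+1)$ into Theorem~\ref{thm:parameter_complexity_qubit}, using $e^\epsilon-1\geq\epsilon$ and $e^\epsilon+1\leq e+1$ for the upper bound and $(e^\epsilon-1)^2\leq 9\epsilon^2$ for the lower bound. The only cosmetic difference is how you obtain the factor $9$ — you use the mean-value/convexity estimate $e^\epsilon-1\leq e\epsilon<3\epsilon$, whereas the paper uses $e^\epsilon\leq(1+\epsilon)^2$ to get $e^\epsilon-1\leq\epsilon(2+\epsilon)\leq 3\epsilon$; both are valid on $(0,1)$ and yield the identical constant.
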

	
\begin{proof}
		Note that $e^\epsilon\geq 1+\epsilon, \forall \epsilon>0$~\cite[p.\,81, Eq.~(3)]{bullen2015dictionary} and $e^\epsilon\leq (1+\epsilon)^2, \forall \epsilon\in(0,1)$~\cite[p.\,81, Eq.~(2)]{bullen2015dictionary}. Therefore, $\epsilon^2\leq (e^\epsilon-1)^2\leq \epsilon^2(2+\epsilon)^2\leq 9\epsilon^2$. Also, $e^\epsilon\leq e$, which gives $(e^\epsilon+1)^2\leq (e+1)^2$. Combining these inequalities with the result of Theorem~\ref{thm:parameter_complexity_qubit} completes the proof. 
\end{proof}
		
Corollary~\ref{cor:1} shows that the best-case sample complexity $\inf_{\mathcal{E}\in{\rm LDP}_{\epsilon}}N_{\alpha}(\mathcal{E})$ scales with $\alpha^{-1}\epsilon^{-2}$ over the entire range of $0<\epsilon<1$ (and not just $\epsilon\ll 1$). Therefore, the developed sample complexity bounds are tight up to a constant for small and medium privacy budget regimes, captured by $0<\epsilon<1$. 

\begin{remark}
	Note that privacy analysis is adversarial in nature, and thus the estimator and the data curator are at odds, i.e., the curator does not want to simplify the task of the estimator. This motivates fixing the description of the quantum state $\rho_\lambda$ by the problem setting (e.g., encoding in quantum communication or machine learning) and the risk appetite of the curator. However, in general, we can consider the impact of the quantum state $\rho_\lambda$ on the sample complexity. For instance, one might
	design $\rho_\lambda$ so as to maximize the quantum Fisher information in the hope of achieving a better sample complexity even after corruption via quantum channels. To illustrate the impact of such a design, consider  pure state $\rho_\lambda$, i.e., $\|\omega_\lambda\|_2=1$. In this case, we can easily see that $C_1\propto \mathcal{F}(\rho_\lambda)^{-1}$ and $C_2=\mathcal{F}(\rho_\lambda)^{-1}$. Therefore, by maximizing the quantum Fisher information, we merely make constants $C_1$ and $C_2$ smaller but the structural results regarding the dependence of sample complexity on $\epsilon$ and $\alpha$ remain the same. This illustrates that quantum  differential privacy restricts the worst-case information flow. 
\end{remark}
			
\begin{remark}
	In the large privacy budget regime, i.e., $\epsilon\gg 1$, The sample complexity bound in Theorem~\ref{thm:parameter_complexity_qubit} simplifies to
	$(C_1/\alpha) e^{-2\epsilon} \lessapprox \inf_{\mathcal{E}\in{\rm LDP}_{\epsilon}} N_{\alpha}(\mathcal{E}) \lessapprox
	(C_2/\alpha).$ Therefore, the upper and lower bounds no longer admit the same asymptotic behavior and are thus no longer tight. In fact, the lower bound  loosens arbitrarily, i.e., the lower bound states that, in the non-private regime $\epsilon\rightarrow \infty$, the number of samples required to achieve a certain level of accuracy tends to zero, which deviates from the established bounds in the absence of quantum differential privacy~\cite{meyer2023quantum}. 
\end{remark}
		
Theorem~\ref{thm:parameter_complexity_qubit}, and by extension Corollary~\ref{cor:1}, assume that $\braket{\partial_\lambda \omega_\lambda}{\omega_\lambda}\neq 0$. This cannot be guaranteed when $\omega_\lambda$ is a pure state for all $\lambda$. To observe this, note that $\braket{\partial_\lambda \omega_\lambda}{\omega_\lambda}=\partial \|\omega_\lambda\|_2^2/\partial \lambda =0$ if $\|\omega_\lambda\|_2=1$ for all $\lambda$. In this case, we can derive matching lower bounds for the following restricted set of quantum channels $\overline{{\rm LDP}}_{\epsilon}:=\{(A,c)\in{\rm LDP}_{\epsilon}|c=0\}.$ These channels capture many noise models that have shown to satisfy quantum differential privacy~\cite{hirche2023quantum}. 
		
\begin{theorem} \label{tho:2}
	Let $\dim(\H)=2$. Then, for all $\epsilon\in(0,1/2)$, 
	\begin{align*}
		\frac{\overline{C}_1}{9\alpha\epsilon^2}\leq 
		\frac{\overline{C}_1}{\alpha(e^\epsilon-1)^2}\leq 
		\inf_{\mathcal{E}\in\overline{{\rm LDP}}_{\epsilon}}N_{\alpha}(\mathcal{E}) \leq 
		\frac{C_2(\sqrt{e}+1)^2}{\alpha \epsilon^2},
	\end{align*}
	where
	\begin{align*}
		\overline{C}_1&=\frac{1}{\|\partial_\lambda \omega_\lambda\|_2}\frac{1}{\|\partial_\lambda \omega_\lambda\|_2+(\sqrt{e}(2-\sqrt{e}))^{-1}},
	\end{align*}
	and $C_2$ is defined in Theorem~\ref{thm:parameter_complexity_qubit}. 
\end{theorem}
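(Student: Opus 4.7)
The plan is to follow the overall strategy of the proof of Theorem~\ref{thm:parameter_complexity_qubit}, with two substantive modifications needed to accommodate the restricted channel class $\overline{{\rm LDP}}_\epsilon$ (where $c=0$) and the possible vanishing of $\braket{\partial_\lambda \omega_\lambda}{\omega_\lambda}$.

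For the lower bound, I would first specialize Lemma~\ref{lem:iff_QLDP_qubit} to $c=0$, obtaining the condition $\|A\omega - e^\epsilon A\nu\|_2 \leq e^\epsilon-1$ for all $\omega,\nu$ in the Bloch ball. Since the ``$\beta$ trick'' of Theorem~\ref{thm:parameter_complexity_qubit} (which required $\braket{\partial_\lambda\omega_\lambda}{\omega_\lambda}\neq 0$) is no longer available, I would instead extract two direct estimates from the lemma: (i) setting $\nu = 0$ gives $\|A\omega\|_2 \leq e^\epsilon-1$ for every unit $\omega$, and in particular $\|A\omega_\lambda\|_2 \leq e^\epsilon-1$ and $\|A\partial_\lambda\omega_\lambda\|_2 \leq (e^\epsilon-1)\|\partial_\lambda\omega_\lambda\|_2$; (ii) for $\epsilon \in (0,1/2)$, the elementary bound $(e^\epsilon-1)^2 \leq (\sqrt{e}-1)^2$ together with (i) yields $1-\|A\omega_\lambda\|_2^2 \geq 1-(\sqrt{e}-1)^2 = \sqrt{e}(2-\sqrt{e}) > 0$, keeping the mixed-state Fisher formula finite. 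Inserting these into
\begin{align*}
\mathcal{F}(\mathcal{E}(\rho_\lambda)) = \|A\partial_\lambda\omega_\lambda\|_2^2 + \frac{|\braket{A\omega_\lambda}{A\partial_\lambda\omega_\lambda}|^2}{1-\|A\omega_\lambda\|_2^2}
\end{align*}
and bounding the cross term by Cauchy-Schwarz should produce a Fisher bound of the shape $(e^\epsilon-1)^2 \|\partial_\lambda\omega_\lambda\|_2 \bigl(\|\partial_\lambda\omega_\lambda\|_2 + K\bigr)$ with $K = 1/(\sqrt{e}(2-\sqrt{e}))$, and the quantum Cram\'{e}r-Rao bound~\eqref{eqn:CRB} then yields the stated lower bound on $N$.

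For the upper bound, I would reuse the global depolarizing channel $\mathcal{E}(\rho) = (p/2)I + (1-p)\rho$ with $p = 2/(1+e^\epsilon)$, exactly as in Theorem~\ref{thm:parameter_complexity_qubit}. The crucial point is that this channel has $(A,c) = ((1-p)I, 0)$, so it in fact belongs to the restricted class $\overline{{\rm LDP}}_\epsilon$. Its post-channel Fisher information satisfies $\mathcal{F}(\mathcal{E}(\rho_\lambda)) \geq (1-p)^2 \|\partial_\lambda\omega_\lambda\|_2^2 = ((e^\epsilon-1)/(e^\epsilon+1))^2 \|\partial_\lambda\omega_\lambda\|_2^2$. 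Saturating Cram\'{e}r-Rao and then applying $(e^\epsilon+1)^2 \leq (\sqrt{e}+1)^2$ (valid for $\epsilon \leq 1/2$) together with $e^\epsilon-1 \geq \epsilon$ would produce the stated upper bound $C_2(\sqrt{e}+1)^2/(\alpha\epsilon^2)$.

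The main obstacle is the careful bookkeeping needed to obtain the precise factor $\overline{C}_1$, especially the appearance of $\|\partial_\lambda\omega_\lambda\|_2$ in first (rather than second) power in one of its factors; this has to come from the interplay between the Cauchy-Schwarz bound on the numerator $|\braket{A\omega_\lambda}{A\partial_\lambda\omega_\lambda}|$ and the separate control of the denominator $1-\|A\omega_\lambda\|_2^2$ afforded by $\epsilon \leq 1/2$, rather than from an obvious single application of either inequality. The requirement $\epsilon < 1/2$ itself is essential: outside this range, $\|A\omega_\lambda\|_2$ could approach or exceed $1$, and the mixed-state Fisher formula would no longer admit a bound of the stated form.
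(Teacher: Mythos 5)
Your proposal matches the paper's proof essentially step for step: setting $\nu=0$ in Lemma~\ref{lem:iff_QLDP_qubit} to get $\|A\|_2\le e^\epsilon-1$, controlling the denominator via $1-(e^\epsilon-1)^2=e^\epsilon(2-e^\epsilon)\ge\sqrt{e}(2-\sqrt{e})$ on $(0,1/2)$ (your route through $1-(\sqrt{e}-1)^2$ gives the same constant), and reusing the global depolarizing channel for the upper bound, correctly observing that it has $c=0$ and hence lies in $\overline{{\rm LDP}}_\epsilon$. The one step you flag as delicate --- the first power of $\|\partial_\lambda\omega_\lambda\|_2$ in $\overline{C}_1$ --- is obtained in the paper by bounding the numerator as $|\braket{A\omega_\lambda}{A\partial_\lambda\omega_\lambda}|^2\le\|A\omega_\lambda\|_2\,\|A\partial_\lambda\omega_\lambda\|_2\le(e^\epsilon-1)^2\|\partial_\lambda\omega_\lambda\|_2$, i.e., Cauchy--Schwarz followed by dropping one power of a product that is at most one, which is exactly the ``interplay'' you anticipated rather than a plain application of Cauchy--Schwarz alone.
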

		
\begin{proof} 
	Setting $\nu=0$ and $\omega=\omega_\lambda$ in Lemma~\ref{lem:iff_QLDP_qubit} results in $\|A\omega\|_2\leq e^\epsilon-1$ and so $\|A\|_2=\sup_{\|\omega\|_2=1} \|A\omega\|_2 \leq e^\epsilon-1.$ Therefore, we can deduce that $\|A\partial_\lambda \omega_\lambda\|_2\leq (e^\epsilon-1)\|\partial_\lambda \omega_\lambda\|_2^2$. Now, we are ready to bound quantum Fisher information as
	\begin{align*}
		\mathcal{F}(\mathcal{E}(\rho_\lambda))
		\leq & \|A\partial_\lambda \omega_\lambda\|_2^2+\displaystyle \frac{|\braket{A\omega_\lambda}{A\partial_\lambda \omega_\lambda}|^2}{1-\|A\omega_\lambda\|_2^2}\nonumber\\
		\leq & \|A\partial_\lambda \omega_\lambda\|_2^2+\displaystyle \frac{\|A\omega_\lambda\|_2\|A\partial_\lambda \omega_\lambda\|_2}{1-\|A\omega_\lambda\|_2^2}\nonumber\\
		\leq & (e^\epsilon-1)^2\|\partial_\lambda \omega_\lambda\|_2^2+\frac{(e^\epsilon-1)^2\|\partial_\lambda \omega_\lambda\|_2}{1-(e^\epsilon-1)^2}\\
		\leq & (e^\epsilon-1)^2\|\partial_\lambda \omega_\lambda\|_2 \left(\!\|\partial_\lambda \omega_\lambda\|_2\!+\!\frac{1}{\sqrt{e}(2-\sqrt{e})} 
		\!\right),
	\end{align*}
	where the last inequality follows from that $1-(e^\epsilon-1)^2=e^\epsilon(2-e^\epsilon)\geq \sqrt{e}(2-\sqrt{e})$ for  $0<\epsilon<1/2$. The rest follows from the quantum Cram\'{e}r-Rao bound and the observation that $(e^\epsilon-1)^2\leq 9\epsilon^2$; see Corollary~\ref{cor:1}. The proof for the upper bound is the same as Corollary~\ref{cor:1} except $\epsilon\leq 1/2$. 
\end{proof}

\begin{proposition}
	Let $\dim(\H)=d$. Then $N_{\alpha,\epsilon}=\mathcal{O}(d\alpha^{-1}\epsilon^{-2})$ for $\epsilon\ll 1$.
\end{proposition}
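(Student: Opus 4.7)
The plan follows the upper-bound half of the proof of Theorem~\ref{thm:parameter_complexity_qubit}: exhibit a concrete $\epsilon$-locally-differentially-private qudit channel, lower bound its quantum Fisher information, and invoke the scalar-parameter quantum Cram\'{e}r-Rao bound, which remains saturable for any Hilbert space dimension~\cite{braunstein1994statistical,braunstein1996generalized}.

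First, I would take $\mathcal{E}$ to be the generalized depolarizing channel $\mathcal{E}(\rho)=(p/d)I+(1-p)\rho$, which corresponds to $(A,c)=((1-p)I,0)$ in the $(d^2-1)$-dimensional Bloch parameterization of~\eqref{eqn:equivalence_2}. A short direct calculation (analogous to Lemma~IV.2 of~\cite{hirche2023quantum} in the qubit case, obtained by letting $\rho$ and $\sigma$ be the projectors onto the largest and smallest eigenvectors of a test operator $M$) shows that $\mathcal{E}\in{\rm LDP}_\epsilon$ if and only if $1-p\leq (e^\epsilon-1)/(d-1+e^\epsilon)$. Saturating this constraint yields $1-p=\Theta(\epsilon/d)$ in the regime $\epsilon\ll 1$.

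Next, since the action of $\mathcal{E}$ merely rescales the Bloch vector, $\tilde\omega_\lambda=(1-p)\omega_\lambda$ and $\partial_\lambda\tilde\omega_\lambda=(1-p)\partial_\lambda\omega_\lambda$. Because $\|\tilde\omega_\lambda\|_2$ is strictly less than $\sqrt{2(d-1)/d}$ for $\epsilon$ small, the mixed-state qudit Fisher-information formula applies and gives
\[
\mathcal{F}(\mathcal{E}(\rho_\lambda))=(1-p)^2(\partial_\lambda\omega_\lambda)^\top\mathcal{M}((1-p)\omega_\lambda)^{-1}(\partial_\lambda\omega_\lambda).
\]
Since $\mathcal{M}(0)=(2/d)I$ exactly and $\mathcal{M}(\omega)=(2/d)I-\omega\omega^\top+G(\omega)$ is continuous at $\omega=0$, a perturbation argument yields $\mathcal{M}((1-p)\omega_\lambda)^{-1}\succeq (d/4)I$ once $\epsilon$ is sufficiently small. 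Consequently, $\mathcal{F}(\mathcal{E}(\rho_\lambda))\geq (d/4)(1-p)^2\|\partial_\lambda\omega_\lambda\|_2^2$. Combining this lower bound with the saturated quantum Cram\'{e}r-Rao identity $N=1/(\alpha\mathcal{F})$ and substituting $1-p=\Theta(\epsilon/d)$ gives
\[
N\leq \frac{4}{\alpha\,d\,(1-p)^2\,\|\partial_\lambda\omega_\lambda\|_2^2}=\mathcal{O}\!\left(\frac{d}{\alpha\epsilon^2}\right),
\]
which is precisely the asserted bound.

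The main obstacle is the perturbation estimate on $\mathcal{M}$. The matrix $G(\omega_\lambda)$ is built from the totally symmetric structure constants of $\mathfrak{su}(d)$, and its operator norm exhibits a nontrivial $d$-dependence, while $\omega\omega^\top$ can have norm up to $2(d-1)/d$. To guarantee that $\|\mathcal{M}((1-p)\omega_\lambda)-(2/d)I\|_2$ is dominated by $1/d$, I would derive an explicit bound on $\|G(\omega)\|_2$ in terms of $\|\omega\|_2$ using the standard normalization $\mathrm{tr}(\hat\eta_i\hat\eta_j)=2\delta_{ij}$, and then impose a $d$-dependent upper cutoff $\epsilon_0(d)$ on $\epsilon$. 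This $d$-dependence inside the regime ``$\epsilon\ll 1$'' is fully consistent with the statement of the proposition, since the $\mathcal{O}(\cdot)$ notation only requires the asymptotic scaling in $\epsilon$ to be $\epsilon^{-2}$ with a constant that is permitted to scale linearly in $d$.
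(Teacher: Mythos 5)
Your proposal is correct and follows essentially the same route as the paper: the global depolarizing channel with $p=d/(d-1+e^\epsilon)$ from Lemma~IV.2 of~\cite{hirche2023quantum}, the lower bound $\mathcal{F}(\mathcal{E}(\rho_\lambda))=\Omega(\epsilon^2/d)$ obtained from $\mathcal{M}((1-p)\omega_\lambda)\approx(2/d)I$, and the saturated scalar quantum Cram\'{e}r-Rao bound. Your treatment of the perturbation of $\mathcal{M}$ (the explicit $\mathcal{M}^{-1}\succeq(d/4)I$ bound and the $d$-dependent cutoff $\epsilon_0(d)$) is in fact more careful than the paper's one-line assertion that $\mathcal{M}(\omega_\lambda)=(2/d)I+\mathcal{O}(\epsilon/d)$, which implicitly assumes the same $\epsilon\ll 1/d$ regime.
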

	
\begin{proof}
	For qudits, the global depolarizing channel gives $\mathcal{E}(\rho_\lambda)=\frac{p}{d}I+(1-p)\rho_\lambda=\frac{1}{d}I+\frac{1}{2}(1-p)\omega_\lambda.\hat\eta.$ 
	From Lemma~IV.2 in~\cite{hirche2023quantum}, we know that $\mathcal{E}\in{\rm LDP}_\epsilon$ if $p=d/(d-1+e^\epsilon)$. 
	Note that if $\epsilon\ll 1/d$, $(1-p)\omega_\lambda=\mathcal{O}(\epsilon/d)$. Therefore, $\mathcal{M}(\omega_\lambda)=(2/d)I+\mathcal{O}(\epsilon/d)$. We have $\mathcal{F}(\mathcal{E}(\rho_\lambda))\geq  (1-p)^2(d/2)\|\partial_\lambda \omega_\lambda\|_2^2=\Omega(\epsilon^2/d).$ Therefore, there exists an estimator for which we can attain the desired estimation error if $N=\mathcal{O}(d\alpha^{-1}\epsilon^{-2})$. 
\end{proof}
	
The dependence of the upper bound to $\epsilon$ does not change when moving from qubits to qudits. The linear dependence to the dimension is not known to be tight. An avenue for future research is to establish matching lower bounds on sample complexity to investigate optimal dependence on dimension. 
	
\begin{remark}
	A similar bound for the classical case is shown to hold~\cite{barnes2020fisher}. For scalar parameters, Proposition~2 and Remark~1 in~\cite{barnes2020fisher} show that the sample complexity in the classical case scales as $\mathcal{O}(\alpha^{-1}\epsilon^{-2})$, which takes the form of Theorem~\ref{thm:parameter_complexity_qubit}, Corollary~\ref{cor:1}, Theorem~\ref{tho:2}, and Proposition~\ref{thm:parameter_complexity_qubit} in the small privacy budget regime. The dependence of the bound on the dimension of the quantum system in Proposition~\ref{thm:parameter_complexity_qubit} distinguishes the bounds in the quantum case from the classical case. 
\end{remark}

\section{Conclusions}\label{sec:conc}
This paper presented upper and lower bounds for sample complexity of parameter estimation under quantum local differential privacy. We particularly consider the best-case scenario in which the sample complexity is minimized over all differentially-private channels. The focus is on scalar parameter regime where it is known that the quantum Cram\'{e}r-Rao can be saturated with finitely-many samples. The bounds match the classical parameter estimation under local differential privacy. Future work can focus on multi-parameter regime and biased estimators. Extension to multi-parameter setting can  start with the sub-regime that guarantees saturation of the quantum Cram\'{e}r-Rao bound~\cite{10480691,ragy2016compatibility}.

\section*{Acknowledgment}
The author would like to thank Theshani Nuradha for pointing out a mathematical typo in the proof of Theorem~\ref{thm:parameter_complexity_qubit}.

\bibliography{ref}
\bibliographystyle{ieeetr}

\appendix

\section{Useful Lemma}

\setcounter{lemma}{0}
\renewcommand{\thelemma}{\Alph{lemma}}

\begin{lemma}\label{lem:useful}
	For any $m\in\R$ and $n\in\R^3$,
	$\|mI+n.\hat{\sigma}\|_1=|m-\|n\|_2|+|m+\|n\|_2|$.
\end{lemma}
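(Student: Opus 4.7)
The plan is to compute the eigenvalues of the $2\times 2$ Hermitian matrix $M := mI + n.\hat{\sigma}$ and then use the fact that for any Hermitian operator the trace norm equals the sum of absolute values of its eigenvalues. Since $m\in\R$ and each Pauli matrix $\hat{\sigma}_i$ is Hermitian, $M$ is Hermitian, so $\|M\|_1 = |\lambda_+| + |\lambda_-|$ where $\lambda_\pm$ are the two (real) eigenvalues of $M$. This reduces the problem to identifying $\lambda_\pm$.

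Next, I would compute the eigenvalues of $n.\hat{\sigma}$ by exploiting the Pauli algebra. The key identity is $(n.\hat{\sigma})^2 = \|n\|_2^2\, I$, which follows from the anticommutation relations $\hat{\sigma}_i\hat{\sigma}_j + \hat{\sigma}_j\hat{\sigma}_i = 2\delta_{ij}I$: expanding $(n_x\hat{\sigma}_x + n_y\hat{\sigma}_y + n_z\hat{\sigma}_z)^2$, the diagonal terms contribute $(n_x^2+n_y^2+n_z^2)I = \|n\|_2^2\, I$ while the six off-diagonal cross terms cancel in pairs by anticommutation. Combined with the fact that each $\hat{\sigma}_i$ is traceless, the matrix $n.\hat{\sigma}$ has trace zero and squares to $\|n\|_2^2\, I$, which forces its two eigenvalues to be $\pm\|n\|_2$. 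Shifting by $mI$ then yields eigenvalues $m \pm \|n\|_2$ for $M$.

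Summing the absolute values completes the identity $\|M\|_1 = |m + \|n\|_2| + |m - \|n\|_2|$. There is no real obstacle to this argument; the only step worth spelling out is the one-line derivation of $(n.\hat{\sigma})^2 = \|n\|_2^2\, I$ via the anticommutation relations. An equally short alternative would be to write $M$ explicitly as a $2\times 2$ matrix, compute its characteristic polynomial $\lambda^2 - 2m\lambda + (m^2 - \|n\|_2^2) = 0$, and read off the roots $m \pm \|n\|_2$ directly.
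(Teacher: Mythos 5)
Your proposal is correct and follows essentially the same route as the paper: identify the eigenvalues of $mI+n.\hat{\sigma}$ as $m\pm\|n\|_2$ and sum their absolute values to get the trace norm. The paper reads the eigenvalues off the characteristic polynomial $\det(mI+n.\hat{\sigma}-sI)=(s-m)^2-\|n\|_2^2$, which is exactly the alternative you mention at the end, while your primary derivation via $(n.\hat{\sigma})^2=\|n\|_2^2 I$ and tracelessness is an equally valid minor variation.
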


\begin{proof}
	Note that $\det(mI+n.\hat{\sigma}-sI)=(s-m)^2-\|n\|_2^2.$ Therefore, the eigenvalues of $mI+n.\hat{\sigma}$ are $s_{\pm}=m\pm \|n\|_2$ and so $\|mI+n.\hat{\sigma}\|_1=|s_+|+|s_-|$.
\end{proof}
	
\end{document}